\title{Joint Methods For Traffic Engineering and Oscillation Repair}
\newcommand{\eat}[1]{}
\author[1]{Alexander Gurney \thanks{alexander.gurney@gmail.com}}
\author[1]{Behnaz Arzani\thanks{barzani@seas.upenn.edu}}
\author{Bo Li}
\author{Xianglong Han}
\author[2]{Roch Guerin\thanks{guerin@wustl.edu}}
\author[1]{Boon Thau Loo\thanks{boonloo@cis.upenn.edu}}
\affil[1]{University Of Pennsylvania}
\affil[2]{University of Washington at St. Louis}
\affil[3]{ Comcast }
\date{}
\theoremstyle{plain}
\newtheorem{definition}{Definition}
\newtheorem{lemma}{Lemma}
\begin{document}

\maketitle

\begin{abstract}
The performance of networks that use the Internet Protocol stack is sensitive
to precise configuration of many low-level parameters on each network device. 
These settings govern the action of dynamic routing protocols, which direct the flow of
traffic; in order to ensure that these dynamic protocols all converge to produce some 'optimal' flow, each 
parameter must be set correctly. Multiple conflicting optimization objectives, nondeterminism, and the need to reason
about different failure scenarios make the task particularly complicated. 

We present a fast and flexible approach for the analysis of a number of such management tasks presented in the context of BGP routing. The idea is to combine logical satisfiability criteria with traditional numerical optimization, to reach a desired traffic 
flow outcome subject to given constraints on the routing process. The method can then be used to probe parameter sensitivity, trade-offs in the selection of optimization goals, resilience to failure, and so forth. 
The theory is underpinned by a rigorous abstraction of the convergence of distributed asynchronous message-passing protocols, and is therefore generalizable to other scenarios. Our resulting hybrid engine is faster than either purely logical or purely numeric alternatives, making it potentially feasible for interactive production use.

\end{abstract}

\section{Introduction}

Network management is inherently complex. Operators contend with a bewildering array of technologies, whose interactions are difficult to predict, and struggle to express their diverse configuration goals in the setup of a live network. The practice of network management could be improved if operators were equipped with better tools for understanding and resolving trade-offs in network configuration.

We propose a formal approach for managing performance trade-offs in complex networks. While the use of  formal specification is not new, several past efforts have been hampered by excessive devotion to a single algorithm or technique; in contrast, our methodology is centered around a common representation to which different algorithms can be applied. 

The problem we study is how to enforce network invariants about traffic flow assertions provided by the operator which ought to be true, but might fail due to operational circumstances. These invariants may be logical, concerning the desired outcome of the network routing process and the paths which are selected, or numeric, relating to the aggregate flow of traffic. While there are many existing approaches to the numeric traffic management goals, based on techniques from operational research, and there are tools that tackle the logical combinatorial goal, there is no unified method. Therefore, key tasks in the practical management of networks are still difficult: operators cannot easily understand their available options for configuration, assess trade-offs between different goals, or perform joint optimization.

This paper presents a focused effort at one particular problem within this general space, namely, we look at the problem of reconciling the enforcement of MED values in BGP routing with individual AS policies without hurting traffic engineering optimizations within that AS. 

Although we present FixRoute in the context of solving MED induced oscillation problems, it can be used in a much more general framework. Namely, the  tool accepts a set of logical preferences (MED values in our problem) and numeric, convex, objective functions (TE cost) as its optimization objectives and produces a set of link weights that results in routing decisions that satisfy both objectives. For example, the operator may want all traffic coming in from source $s$ to go through a particular middlebox $m$ for processing before being forwarded to the destination. FixRoute can translate this rule to a set of logical preferences that favors all paths from $s$ that include $m$ to all other paths and allocate link weights so as to cause minimum disruption to TE objectives. As an additional example the methods presented in this paper can be used to implement the optimizations proposed by~\cite{feamster} which reroute flows in the network to minimize the aggregate peering and infrastructure cost without exceeding network capacity. 

While it is true that MED values are mostly ignored in today's BGP route selection policies, we believe that one of the reasons for this is the complexity of configuring such policies to maintain autonomous level (AS) level objectives while allowing for possible preferences from neighboring ASes to be accommodated as well. Given that the intersection of these two objectives may lead to the occurrence of oscillations and instabilities in the network, the tendency to ignore MED values is understandable. 

However, we believe that MED values are useful in their own right. Today, many network administrators use alternative means, e.g., AS prepending, to try to enforce certain preferences in BGP route selection. These methods reduce transparency between participating ASes and make it harder for individual ASes to enforce their own ingress routing preferences.  In this paper, we use the MED value oscillation problem to illustrate the viability of our approach in solving network policy problems.   

Our tool FixRoute provides and implements a structure to ensure that logical invariants are enforced and numeric traffic optimality goals are reached as far as possible; section \ref{sec:probstatement} explains the specific problem in more detail.

FixRoute provides a means for assessing trade-offs between different objectives in network management. Previous attempts at automating management tasks have sought to incorporate many objectives into a fixed strategy.  The problem with this approach is that it does not match the typical modes by which networks are actually managed, where the trade-offs are contextual, and it may be difficult to express a single strategy that encompasses all needs.

It is not feasible to ask a network operator to use a system where the first step is to write a complex optimization formula, even if we can then solve it exactly. In contrast, our approach enables operators to experiment with the interplay of these objectives, within a sound conceptual design underpinned by a proven mathematical formalism.

\section{Problem Statement}
\label{sec:probstatement}
Our developed configuration repair tool chain, FixRoute, is for repairing broken invariants within a single AS. We will assume that network-internal paths are established according to a shortest path rule, as is standard. External paths will be selected according to the process of the Border Gateway Protocol (BGP). Note that desired outcomes for BGP policy can be expressed in terms of given path preferences: assertions about which paths should be preferred as a result of running the distributed decision process. 

It has been known that BGP routing can fail through persistent oscillation~\cite{first},~\cite{second}, where the protocol will switch between best paths at high speed, causing increased control plane traffic, router overhead, and degradation of data plane service. Within a single AS, oscillations are typically caused by disrupting interactions with the policies of the neighboring ASes. One important class of convergence failure that FixRoute aims to handle is MED-induced oscillations ~\cite{second}. 

Proposed solutions to persistent oscillations today are non-ideal. They range from waiting for misconfigurations to go away, ad-hoc reconfigurations,  to design guidelines ~\cite{second} that provide heuristics on reducing particular classes of oscillations, and perhaps most drastic of all, completely ignoring the MED values. For mitigation, techniques such as route flap damping ~\cite{third} can lessen the effects of oscillations, without actually tackling the root cause. This could be helpful in allowing extra time for ad-hoc reconfigurations, but it is by no means a solution. In the long term, redesigning the protocols involved (as in ~\cite{fourth}, for example) would be the best option, but that future has not yet arrived.

None of the existing solutions take traffic engineering (TE) into consideration. While both convergence and TE relate to the action and outcomes of the routing protocol, they have quite a different flavor. TE is concerned with numeric constraints on the flow of data across network links, and is traditionally approached using the classic numeric techniques of operational research and combinatorial optimization. In contrast, the question of convergence requires reasoning about the action of distributed protocol, and the techniques here are drawn from logical reasoning. These goals are both important, but are in conflict because there is no particular reason why a TE-optimal configuration should satisfy BGP convergence goals, nor why the attempt to propitiate BGP should result in a configuration with good traffic flow properties. Ideally, one would want to have a solution concept that would allow the network operator to assign preference as to which constraint should be of greater importance.

Our goal is to take a network, and mandated preferences for path selection, and synthesize a sequence of configuration updates that (1) upholds the given preferences, and (2) does not cause severe harm to TE cost. In particular, we address the case where the mandated preferences are chosen to avoid an inter-AS BGP oscillation problem. The configuration parameters we target are the link weights of the interior routing protocol.

We focus on link weight changes because these are frequently used for TE, including by automated processes, so there is precedent for deciding what the link weights should be in order to meet a network design objective. The fewer actions needed to be taken, the faster the recovery process will be. The computation of the sequence should be efficient and reliable. It must be able to produce results that are at least as good as could be reached manually. These characteristics would make it suitable for hands-off deployment, automatically reconfiguring link weights as problems arise.

\subsection{MED Induced Oscillations}

\begin{figure}[h!]
  \centering
    \includegraphics[width=0.5\textwidth, height=5cm,keepaspectratio]{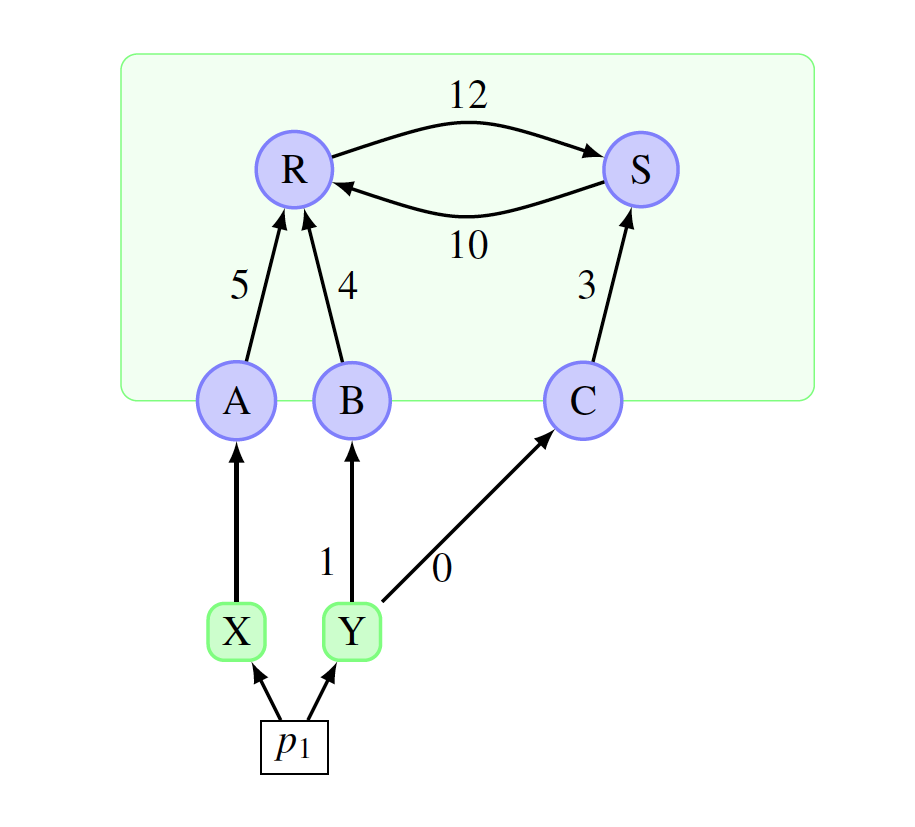}
      \caption{\label{fig:MED} Minimal network configuration with MED oscillation}
\end{figure}

See Figure ~\ref{fig:MED} for the topology. The main AS has five routers configured into two route-reflector clusters: A and B are clients of R, and C is a client of S. The three boarder routers are connected to two external ASes (shown in green). The  annotations on the AS-internal links indicate IGP distance values, and those on the external links are MED values.

The cause of the oscillation is the multiple-exit discriminator (MED) BGP attribute. MED is taken into account when choosing between routes with the same next-hop AS; for these, routes with lower MED values are better. Ties are broken by further attribute comparisons, principally intra-AS path lengths, which are computed by a separate shortest-path protocol. Therefore, for router R here, the route through B is better than the A route (MED) and the A route is better than the C route (shortest paths). 

Cycles of preferences, as seen in the previous example, can cause oscillations, and changing the weight values can create/destroy a cycle. However, due to the number of paths and complexity of the route computation it is not easy to judge which changes should be made in practice. The general principle is that the IGP path weights should be consistent with the preferences imposed by the MED attribute. In this way, preference cycles like those illustrated above are repressed. This can be achieved by reweighing the IGP links, by suppressing or overwriting MED values, or both. We attempt to change the link weights so that they become consistent with the given MEDs. That having been said, we recognize that it is important to provide operators with insights into how `binding' the MED constraints are, so that they can be better informed about the consequences of accepting these values: we propose that by giving our tool different combinations of the possible incoming MEDs, users could gain insight into the tradeoffs involved. Variations of the unequal split optimization (which we will discuss in section ~\ref{sec:unequal}) can be used as a lower bound on the minimum TE shift that the operator should expect.

\subsection{Link Weight Optimization}

Instead of an ad-hoc solution, FixRoute plans to repair an existing oscillation network instance, as a sequence of link weight updates. In a given AS, routers use assigned link weights to run shortest path algorithms to compute path to given destinations. Adjusting link weights allows the network operator to simultaneously enforce route preferences while meeting traffic engineering costs.

\section{Overview}
\label{sec:overview}

\begin{figure}[h!]
  \centering
    \includegraphics[width=0.5\textwidth,height=4cm,keepaspectratio]{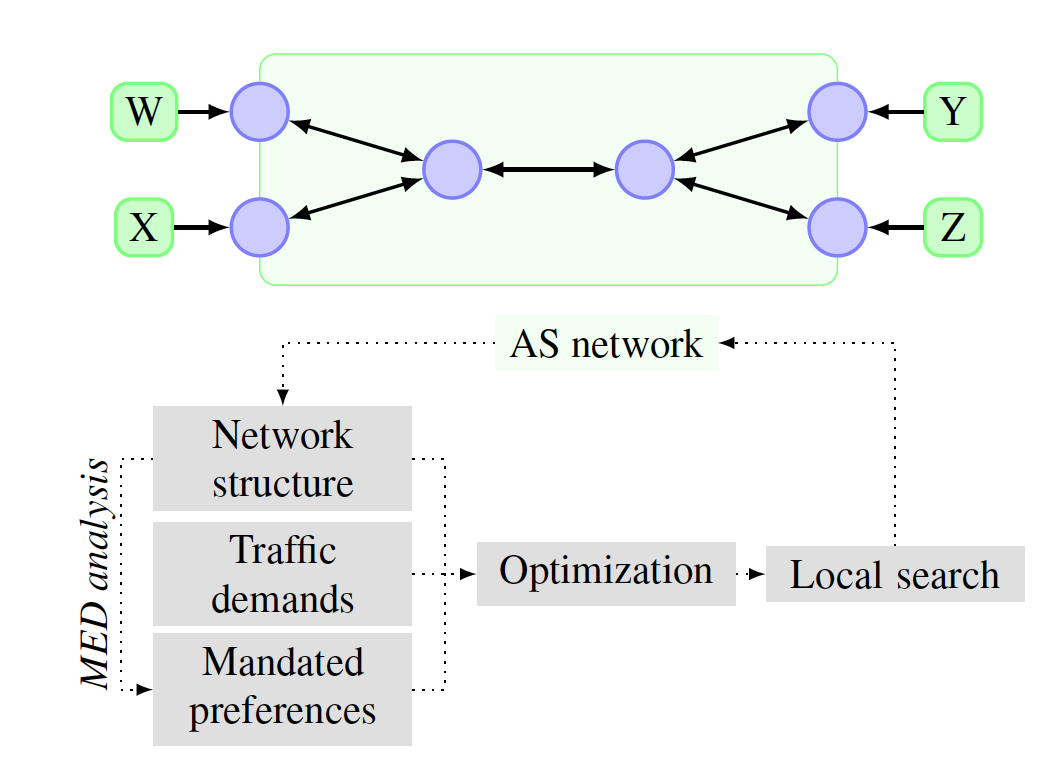}
      \caption{\label{fig:overview} Overview of the component parts of FixRoute.}
\end{figure}

Figure ~\ref{fig:overview} shows a high-level overview of various components of FixRoute. FixRoute is designed for configuring the link weights within an AS, in order to meet two goals: (1) implementing route preferences specified by the AS's network operator, and (2) optimizing for TE based metrics. 

FixRoute takes as input the following information:

\begin{itemize}
\item \textbf{Network instance. } Includes the topology of the entire AS's internal network, and existing link weights. Weights may be specified as unknown, to be set by FixRoute. 

\item \textbf{Traffic demand matrix. } For each source $s$ and destination $d$, specifies the amount of data from $s$ to $d$.

\item \textbf{Mandated preferences. } A listing of which paths are to be made `better' than which other paths. These can also be derived via analysis from the network instance and information about possible incoming routes.

\end{itemize}

In addition the system takes an optimality threshold parameter $\gamma$. This will be the tolerance set by the network operator specifying the allowed deviation of the solution's TE cost from the original. FixRoute then attempts to generate a sequence of changes to existing/unspecified link weights. Within FixRoute, we propose the following optimization strategies, proven correct by utilizing the PSPP concept (Section ~\ref{sec:PSPP}).

\begin{itemize}
\item \textbf{Max-SMT strawman.} (Section \ref{sec:SMT}) We express the input network instance as a PSPP, and then utilize a Max-SMT solver to enumerate all possible paths in the network to find an optimal solution. This solution requires time complexity that is exponential in the number of nodes.

\item \textbf{Unequal splitting. }(Section \ref{sec:unequal}) Assuming routers permit unequal splitting of traffic among paths of equal lengths, we utilize a two level optimization strategy that generates the set of link weight changes. The solution yields the best possible tradeoff between achieving a TE optimal solution and minimizing the number of weight changes.

\item \textbf{Equal splitting. }(Section \ref{sec:equal}) If routers require equal splitting of traffic among paths of equal lengths, finding the optimal solution is NP hard. we propose a modification to an existing local search heuristic that comes reasonably close to the TE optimal ~\cite{fifth}. 

\end{itemize}
\section{Path Preference Model}
\label{sec:PSPP}

Routing policy is known to be approachable in terms of various mathematical formalisms, including algebraic and combinatorial models ~\cite{sixth},~\cite{seventh}, ~\cite{eigth}. In past work, these have been used to study routing protocol convergence, and to explore the expressive power of different policy schemes. However, we will use a policy model to support reasoning about possible outcomes of the routing process: which paths might be selected as a result of procotol execution, given a network configuration. The formalism is based on partially specified configuration information, whereby `gaps' in the configuration can be filled in, according to some high-level goal. 

The partial stable paths problem (PSPP) is a combinatorial representation of partially-specified routing protocol convergence ~\cite{sixth}, ~\cite{seventh}, ~\cite{eigth}. A PSPP instance consists of an enumeration of possible network paths, and the known preferences among them. In our case, it does not include preferences induced by link weights, but specifically does include preferences that are mandated by the user. The missing, undecided, preferences are the `gaps'. An instance can be completed by filling in these gaps, therefore determining the relative preference of all paths, and the values of the configuration parameters that give rise to those preferences. Any potential oscillation problems can be detected as `conflicts' of path preference in this structure; specifically, the presence of a cycle of preference, where each path is considered to be strictly preferred to the next ~\cite{ninth}.

Absence of a preference cycle is equivalent to the key properties of safety and robustness of a routing configuration ~\cite{seventh}, ~\cite{eigth}, ~\cite{ninth}. Safety means that all fair protocol executions, from any given configuration state, will eventually converge to the same unique fixed point. This means that there can be no permanent oscillation in the protocol, and furthermore that the protocol produces the same result regardless of timing. This second feature is held to be important because multiple stable solutions typically occur accidentally-there is some `intended' final state, but in some circumstances the protocol could end up wedged in an unusual alternative routing state ~\cite{tenth}. Robustness says that the state will remain safe even if some nodes and arcs are removed. This is meant to rule out configurations with certain latent problems, where an oscillation is being hidden by some other aspect of the routing state, and where the failure of the part of the network will cause the oscillation to reveal itself. Note that these properties hold regardless of the dynamic execution of the protocol. 

\begin{definition}{\label{def:PSPP} A partial stable paths problem (PSPP) consists of a directed graph $\mathcal{G}(V,E)$; a designated destination node $d \in V$; and for each $v \in V$:
\begin{enumerate}
\item A subset $\mathcal{P}_v$ of the simple paths from $v$ to $d$, and
\item A partial order $\le_v$ on $\mathcal{P}_v$.
\end{enumerate}
The paths in $\mathcal{P}_v$ are the permitted paths and the order $\le_v$ gives path preferences.
}
\end{definition}
Because the order $\le_v$ is partial rather than total, this definition allows for some preferences to be undefined. These preferences are given abstractly, but in real protocols they are induced by attributes of the paths and by fixed rules in the routers.

\begin{definition}{ One PSPP $\mathcal{P}$ is a refinement of another PSPP $\mathcal{Q}$ if they have the same graph, same destination node, and same sets of permitted paths; and for each node $v$ if $p \le_v q$ for some paths $p$ and $q$ according to the order in $\mathcal{Q}$, then $p \le_v q$ according to the order in $\mathcal{P}$ as well.
}
\end{definition}
A special case of refinement is when all path preferences have been set one way or another, and so the orders $\le_v$ are all total. The concept of preference cycle is defined with reference to the path digraph. This data structure contains information about paths and their dependencies.

\begin{definition}{\label{def:di} Given a PSPP $\mathcal{P}$, its path digraph is the directed graph $(V', E')$ where the nodes in $V'$ are the paths in $\mathcal{P}$, and $(p,q)$ is an arc in $E'$ if  and only if either:
\begin{enumerate}
\item $p$ is a suffix of $q$, or
\item $p$ and $q$ have the same source node $v$, and $p \le_v q$ but $\neg (q \le_v p)$.
\end{enumerate}
For these two cases, we use the notation $p \rightarrow q$ and $p \preceq q$ respectively.
}
\end{definition}

A path is said to `depend on' its predecessors in this digraph, because it is the presence/absence of these paths which completely determine whether the path should be part of the routing solution. For a path to be selected, its suffixes must be present (corresponding to the BGP rule that one is only allowed to use routes which have been advertised and not withdrawn by neighbors), and no better path can be present. A preference cycle is simply a cycle in this digraph. 

The PSPP model is well-adapted for reasoning about configuration repair. We can ensure that certain desired route preferences will indeed hold, and then use that information to deduce information about other paths and the underlying configuration. In particular, oscillation can be avoided by requiring that the structure be acyclic, or linearizable: this means that it is possible to find a (single, global) ordering of all the paths, which is compatible with both the suffix relation and the preference $\le_v$ relations. A linearization may be witnessed by giving a numeric `rank' to each possible path, where each path must have strictly larger rank than any of its predecessors.

\section{Repair With Optimization}
This section describes our method for carrying out invariant repair, while trying not to harm the TE optimum. We first present an inefficient strawman solution that linearizes a PSPP instance, followed by two numeric-based techniques that leverage the PSPP model for correctness, as applied to routers that permit unequal and equal traffic splitting respectively for paths of the same length. We assume that preferences for the linearization are those which arise from oscillation-avoidance; these can be derived in a straightforward way from the network configuration and information about incoming paths.

\subsection{Oscillation Avoidance}
As discussed above, the oscillation problem arises from non-linearizability of route preferences: in short, when the preferences imposed by MED attributes (or by the administrator) don't match those induced by the link weights. Therefore, if we take the incoming MEDs as binding, the desired invariant for our link weight is that they should result in the same preferences as the MEDs do; whenever $p$ and $q$ are two paths, where $p$ is preferred to $q$ by reason of MED, then $p$ should also be of lesser weight than $q$. 

It is conceptually straightforward, given an account of the incoming routes, to derive these preferences. The paths in the network can be enumerated, and for each pair, we can determine whether MED will be the decisive attribute in comparison. If so, then this should be one of the mandated preferences. On the network of Figure ~\ref{fig:network}, this can be done in less than a second by a naive algorithm.

\begin{figure}[h!]
  \centering
    \includegraphics[width=0.5\textwidth]{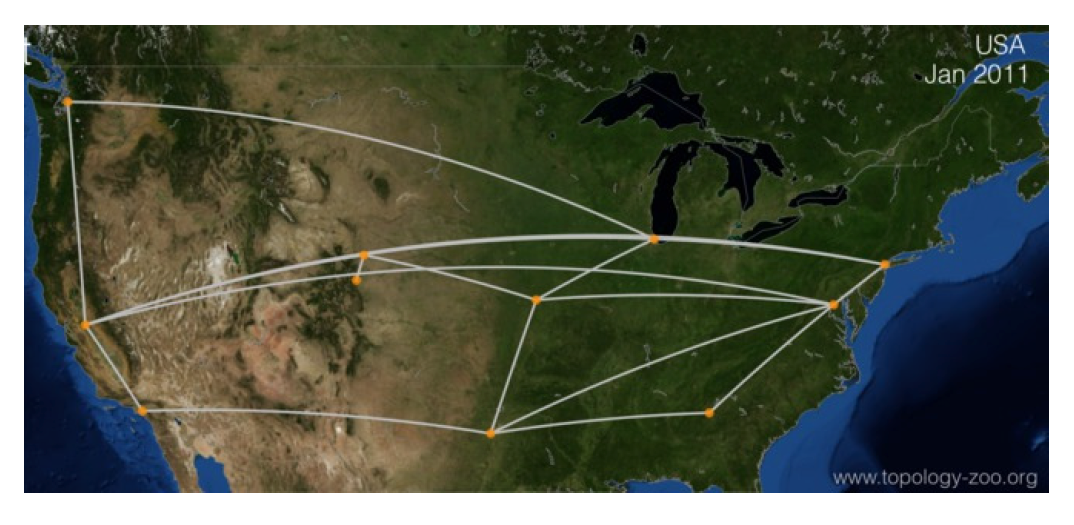}
      \caption{\label{fig:network} AS1239 topology, from Topology Zoo ~\cite{fourteenth}, ~\cite{fifteenth}.}
\end{figure}

\subsection{Strawman: Pure Linearization}
\label{sec:SMT}

This technique uses Max-SMT, which is the SMT version of Max-SAT. That is, SAT problems are about satisfiability of Boolean formulae; SMT (satisfiability modulo theories) adds a richer language, including integer arithmetic; and `Max' means that the goal is to satisfy as many clauses as possible, not just all-or-nothing. Each claus in the Max-SMT program is associated with a numeric cost, and we can then ask a solver for a model which maximizes the total cost of the assertions that are satisfied. Practical Max-SMT solvers may fall back to heuristic rather than exact methods, if the number of variables or complexity of the formula are high enough. Assertions that are unweighted are taken to be constraints that must always hold. The Yices solver ~\cite{eleventh} is one that is capable of operating in Max-SMT mode.

We generate a Max-SMT program for the repair problem using the following steps: 
\begin{enumerate}
\item Enumerate the paths in the network and give each one a numeric ranking variable. 
\item Assert the path digraph constraints ($\rightarrow, \preceq$) on the rankings, arising from the network structure and the mandated preferences.
\item Associate each path with its sum of link weights, and assert that if a path has a lower ranking than another, then it must have a lesser sum of weights.
\item Assert (with finite weight) that each individual link weight should have its original value.
\end{enumerate}
Then, a solution to this program will represent a linearization of the path digraph, with associated link weights such that the path preferences are indeed induced by those weights. Moreover, because of the maximization of satisfiability, as many link weights as possible will be unchanged.

This approach fails to achieve the desired TE properties and is rather inefficient. In contrast, the next section presents a full method, also based on the PSPP model, that does achieve all of the problem goals.

\subsection{Numeric Optimizations}

We now present our numeric optimization technique, which
incorporates the constraints of the desired invariant, and
therefore tries to produce a solution that is optimal subject to
those constraints. The first solution involves an optimization problem which provides a set of link weights that satisfy the given mandated preferences, without harming TE cost too much. This formulation assumes that when path of equal best cost exist, the traffic flow can be split between them in any desired proportion. Therefore, when the method produces an `optimal' weight assignment, the assessment of its cost (and even its feasibility with respect to bandwidth constraints) may not match the true cost in a system where such splitting is not possible. This is typical for the current IP data plane, where if traffic splitting is allowed, it may only be possible to split in equal ratios (e.g. ECMP ~\cite{eq}).

The case where no traffic splitting is permitted can be handled by a small adjustment to the linear programing objectives. For the more interesting case of equal splitting, we show a heuristic local search procedure that takes the linear programming solution as its starting point, and adapts it to find a `nearby' link weight assignment which meets the refined objectives. In the next sections we will investigate the practical performance of the combined procedure, and show that it is more effective than running the heuristic local search alone \footnote{ The reason is that the bulk of the effort of finding the solution still rests with the linear program which finds the best possible solution, the heuristic then tries to find the solution that is as close as possible to the linear optimization configuration.}. 

In the following, we represent the network topology as an undirected graph $G(N,E)$, where $N$ is the set of nodes and $E$ the set of edges. A simple path in this graph is a sequence of connected nodes in which every node in $N$ appears at most once. The notation $[ i_1,i_2,\dots,d]$ stands for a path that goes through nodes $i_1, i_2, \dots$ and has destination $d$. An empty path for node $d$ is denoted as $\epsilon_d$.

The set of all paths to destination $d$ is defined as $\mathcal{P}_d$. A set of mandated preferences for paths with destination $d$ is given by the binary relation $\mathcal{S}_d$ on $\mathcal{P}_d$, where $(p,q) \in \mathcal{S}_d$ means that path $p$ is to be preferred to path $q$. Per node preferences $p \le_v q$ can be defined as $(p,q) \in \mathcal{S}_d$. 

Recall from Definition \ref{def:di} that the path digraph $\mathcal{G}'$ for this given instance has the paths in $\mathcal{P}_d$
 as its nodes (for each destination $d$), connected by edges given by the preferences in $\preceq$, derived from $\le_v$, and the immeidate suffix relation $\rightarrow$ on paths. Note that for every destination node $d$ in $G$, and every path digraph node $p=[i,\dots,d]$ in $\mathcal{G}'$, there is a chain in $\mathcal{G}'$ connecting $\epsilon_d$ to $p$, consisting of all the suffixes of $p$.

Our problem is to assign link weights to all the links in $E$, so that the induced path weights will be consistent with the mandated preferences in each $\mathcal{S}_d$. These mandated preferences may be given in order to avoid oscillation, or for other purposes. As previously observed, avoidance of oscillation corresponds to finding a linearization of $\mathcal{G}'$ and a witness to this linearization is an assignment of a numric `ranking' to each path, in a way consistent with the digraph structure (i.e., if path $p$ is reachable from $q$, then it should have a strictly larger ranking). This notion can now excercise a double duty: rather than being arbitrary numbers, these rankings (potentials), will now be precisely the sum of the weights for each path. Let $\lambda_p$ stand for the potential assigned to path $p$. If $p$ is a suffix of $q$ then it should have a smaller potential, which remails true when the potentials are sums of link weights: $q$ contains all the links that $p$ does, so its potential should be at least as large as that of $p$ \footnote{Note that we are assuming positive link weights.}. Similarly $(p,q)$ in $\mathcal{S}_d$ tells us that path $p$ should have lower weight than path $q$. Of course it is not enough to simply linearize $\mathcal{G'}$, since we need to make sure not only that the potentials are consistent with the digraph structure but also that they are realizable in terms of link weights. In addition to the desired link weights $w_{ij}$ for $(i,j) \in E$, we generate virtual link weights $w'_{pq}$ for each mandated preference $(q,p) \in \mathcal{S}_d$. This results in the following description of the feasible set of link weights:
\begin{equation}
\begin{aligned}
\label{eq:feasibility}
&\lambda_p -\lambda_q=w_{ij} \quad (i,j) \in E, q=[j,\dots,d], p=(ij)q\\
&\lambda_p -\lambda_q=w'_{pq} \quad (p,q) \in \mathcal{S}_d\\
& \lambda_{\epsilon_d}=0\\
& w'_{pq} =w_{ij} \quad (i,j) \in E, q=[j,\dots,d], p=(ij)q\\
& 0 \le w_{ij} \le w_{max} \\
& 1 \le w'_{pq} \le w_{max} \\
& 0 \le \lambda_{p} \le |N|w_{max}, \quad p \in \mathcal{P}_d\\
& w_{ij}, w'_{ij} \text{integers}.
\end{aligned}
\end{equation}

The value $w_{max}$ is the maximum permitted link weight. Checking \ref{eq:feasibility} requires enumerating all paths in $\mathcal{G}$. However, the followoing statement allows us to reduce the search space.

\begin{lemma}
Let the set of all paths that appear in $\mathcal{S}_d$, together with all of their suffixes be $\mathcal{S}'_d$. Any solution for the restriction of \eqref{eq:feasibility} to paths in $\mathcal{S}'_d$ determines a solution for the unrestricted problem.
\end{lemma}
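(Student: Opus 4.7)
The plan is to show that any feasible assignment on the restricted variable set can be extended to a feasible assignment on the full problem without disturbing any already-fixed value. Given a solution on $\mathcal{S}'_d$, I would keep the $w_{ij}$ values as they are (filling in any edges that do not appear on a path of $\mathcal{S}'_d$ with an arbitrary integer in $[0,w_{max}]$), keep the virtual weights $w'_{pq}$ as they are, and for every path $p \in \mathcal{P}_d \setminus \mathcal{S}'_d$ define $\lambda_p$ by the sum of the link weights along $p$, i.e.\ $\lambda_p = \sum_{(i,j)\in p} w_{ij}$.

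The first step is to check that this extension does not conflict with the potentials that the restricted solution already assigned to paths in $\mathcal{S}'_d$. The key observation is that $\mathcal{S}'_d$ is closed under taking suffixes by construction, so for every $p \in \mathcal{S}'_d$ the entire suffix chain from $\epsilon_d$ up to $p$ lies in $\mathcal{S}'_d$, and the restricted problem therefore already contains all of the constraints $\lambda_{p'} - \lambda_{q'} = w_{ij}$ for consecutive suffixes on that chain. Combined with $\lambda_{\epsilon_d}=0$, induction on path length gives $\lambda_p = \sum_{(i,j)\in p} w_{ij}$ for every $p \in \mathcal{S}'_d$, so our definition agrees with the existing values.

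Next I would verify that each constraint of \eqref{eq:feasibility} holds for the extended assignment. The suffix equalities $\lambda_p - \lambda_q = w_{ij}$ are automatic from the definition once the potentials are weight-sums. The mandated-preference equalities $\lambda_p - \lambda_q = w'_{pq}$ and the coupling $w'_{pq}=w_{ij}$ only involve paths already in $\mathcal{S}'_d$, and so are inherited unchanged from the restricted solution. The bounds $0 \le w_{ij} \le w_{max}$ and $1 \le w'_{pq} \le w_{max}$ are likewise inherited. The new potentials satisfy $0 \le \lambda_p \le |N| w_{max}$ since each $p$ is a simple path, hence has at most $|N|$ edges of weight at most $w_{max}$, with weights non-negative. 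Integrality of $w_{ij}$ and $w'_{pq}$ is preserved because we copy the restricted solution and only fill in remaining $w_{ij}$ with integers.

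The main conceptual obstacle is the consistency step in the previous paragraph: we have to be certain that, on the overlap $\mathcal{S}'_d$, the two definitions of $\lambda_p$ (the one inherited from the restricted solution and the one given by summing weights along $p$) coincide. This is exactly where the requirement that $\mathcal{S}'_d$ contain all suffixes of every path in $\mathcal{S}_d$ is used; without it, the inductive argument would break at the first missing intermediate suffix. Everything else in the proof is bookkeeping against the list of constraints in \eqref{eq:feasibility}.
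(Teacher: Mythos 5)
Your proposal is correct and follows essentially the same route as the paper: extend the restricted solution by setting $\lambda_p=\sum_{(i,j)\in p} w_{ij}$ for every path outside $\mathcal{S}'_d$ and then check feasibility constraint by constraint. The only difference is one of care, not of method: you make explicit the suffix-closure induction showing that the restricted potentials already equal their weight-sums (and you fill in weights for edges untouched by $\mathcal{S}'_d$), points the paper's proof uses implicitly without spelling out.
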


\begin{proof}
Let $\lambda, w,$ and $w'$ give a solution for the restricted problem. For any path $p$ in $\mathcal{P}_d \setminus \mathcal{S}'_d$, set $\lambda_p=\sum_{(i,j)\in p} w_{ij}$. We show that this yields a feasible solution to the unrestricted problem. 

Let $p$ and $q$ be paths in $\mathcal{P}_d \setminus \mathcal{S}_d$. If neither is a suffix of the other, then their assigned potentials are satisfactory. For in this case, they are unconnected in $\mathcal{G}'_d$, and so neither $\lambda_p \le \lambda_q$, nor $\lambda_q \le \lambda_p$ is required. Otherwise, if (without loss of generality) $p$ is a suffix of $q$ there is a path in $E$ that extends $p$ to yield $q$, and $\lambda_q-\lambda_p$ gives the total weight of this path. This is consistent with the first condition of \eqref{eq:feasibility}. The same argument carries through for path $p$ and $q$ such that $p \in \mathcal{S}'_d$ and $q \in \mathcal{P}_d\setminus \mathcal{S}'_d$ with the caveat that now the only possibility is for $p$ to be a suffix of $q$. The remaining conditions relate to paths in $\mathcal{S}_d$, and are already satisfied.
\end{proof}

Thus, as long as any set of weights satisfies ~\eqref{eq:feasibility} for all paths in $\mathcal{S}'_d$, it can be considered safe. Note that if $p,q \in \mathcal{S}'_d$ and $q$ is a suffix of $p$, then condition ~\eqref{eq:feasibility} needs to be checked to show that there exists a set of positive integer weights that result in an assignment of potentials satisfying the mandated preferences and $\lambda_p \ge \lambda_q$. We denote the initial unsafe weights as $w^e$. In order to perform repair with minimum weight changes, the cost function $h(x)$ is defined as

\begin{align}
\label{eq:cost}
 h(x) = \left\{ 
  \begin{array}{l l}
    x^2 & \quad \text{if $|x| <\epsilon$}\\
   M & \quad \text{otherwise,}
  \end{array} \right.
  \end{align}
  in which $M$ is a large number. Thus we arrive at the following formulation of the problem:
 
 \begin{equation}
 \begin{aligned}
 \label{eq:opt1}
 \min_{p \in \mathcal{S'}, w_{ij}, w'_{ij}} \sum_{i,j} h(w_{ij}-w^e_{ij})\\
 s.t. \quad \text{conditions in equation ~\eqref{eq:feasibility}}.
 \end{aligned}
 \end{equation}
 
 Using the method derived in ~\cite{twelvth}, this problem can be solved by formulating it as a dual network flow problem in $O(nm\log{n^2/m})\log(nU)$ time, where $n=|\mathcal{S}'_d|$, $m=|w|+|w'|$, and $U=|N|w_{max}$.
 
 Though \eqref{eq:opt1} allows for repair with minimum weight changes in polynomial time, it completely ignores the impact on network traffic. In the following subsections, we provide means for network operators to perform repair while remaining within $\gamma$ percent of the TE optimum.
 
 \subsection{When routers can perform unequal splitting}
 \label{sec:unequal}
 Traditional link state routing protocols such as OSPF split traffic equally among paths of equal length. the previous work of Fortz and Thorup ~\cite{fifth} has shown that solving the TE problem with equal splitting is NP-hard. Though we address this problem by providing a heuristic in the next subsection, it is important to note that new routing policies such as MPLS have made it possible to realize optimum solutions that require unequal splits in the network. A solution to this more general problem also makes it possible to gauge whether an oscillation free solution with minimum traffic disruptions is feasible. This allows network operators to identify whether they benefit from incorporating other means of eliminating oscillations in their networks. For these reasons, we provide a brief summary of our solution, and then focus on the more prevalent problem of equal splits.
 
 We denote the flow going through each link $(i,j) \in E$ as $x_{ij}$, and the link incident matrix of the graph $G$ as $A$. The demand matrix is captured using the vector $b$. The flow $x$ is thus the solution to the following optimization problem:
 
 \begin{equation}
 \begin{aligned}
 \label{eq:netflow}
 &\min_x w^T x\\
 &s.t. \quad Ax=b\\
 &x\ge 0
 \end{aligned}
 \end{equation}

Defining the cost function of the TE problem as $\sum_{i,j} \phi(x_{ij})$ and the previous load on link $(i,j)$ as $x^e_{ij}$ the problem can be formulated as:

 \begin{equation}
 \begin{aligned}
 \label{eq:game}
 &\min_{w,x} \sum_{i,j} h(w_{ij}-w^e_{ij})+M'\sum_{i,j}(\phi(x_{ij})-\phi(x^e_{ij})(1+\frac{\gamma}{100}))\\
 &s.t. \quad conditions \eqref{eq:feasibility}, \eqref{eq:netflow},
 \end{aligned}
 \end{equation}
 
where $M'$ is a large number. By changing the values of $M$ and $M'$, an operator can specify a Pareto frontier of his optimization goals and indicate which goal should be of more importance. $\phi(x)$ may be any convex function of $x$, however,  we take $\phi(x)$ to be the cost function introduced in ~\cite{fifth}.

 Equation \eqref{eq:game} is a non-convex two level optimization problem also known as a Stackelberg game \footnote{To see the analogy with a Stacklberg game, note that the network administrator acts as the master that can set the weights ($w$) for the network and has a TE cost function ($f(x,w)$) that is dependent on the network flow ($x$) which in turn is realized by the slave problem \eqref{eq:netflow} (the network).}. Two level optimizations have been extensively studied in the literature and numerous solutions have been proposed. In particular we found that the solution proposed in ~\cite{thirteenth} provides a means of efficiently solving \eqref{eq:game}. As the main role of weights in our formulation is to assign an ordering among path, the solution to the relaxation of this integer programming problem often provides an exact integer assignment that results in the same ordering.  For lack of space the reader is referred to ~\cite{thirteenth} for more detail on how to solve \eqref{eq:game}.

 \subsection{Solving the problem for equal splits}
 \label{sec:equal}
 The solution of \eqref{eq:game} yields the best possible cost tradeoff between achieving a TE optimal solution and minimizing the number of weight changes, in providing an oscillation free weight setting. However, most routers in the internet don't allow arbitrary splitting between equal cost paths.
 
 The authors of ~\cite{fifth} showed that finding a solution to the TE problem with equal splitting is NP hard; adding our mandated preference constraints to this formulation only exacerbates the complexity. They provide a heuristic algorithm performing local search operations that come reasonably close to the TE optimum. In this section we introduce our modification to this approach that ensures that the resulting weights satisfy ~\eqref{eq:feasibility}. We do not explain all the nuances of their procedure here, but concentrate on the general outline and our changes.

 The local search procedure attempts to find an optimal set of link weights by taking an initial weight assignment, exploring a sample of its `neighborhood' of similar assignments, and repeating the search on the best assignment found. A neighborhood of a weight assignment consists of all the assignments which differ from it in one link weight only; the algorithm adaptively explores a different proportion of the neighborhood in each step, depending on the progress of the search so far. The cost function used sums the loads of all the links, as scaled by a fixed piecewise linear function. While this procedure is by nature heuristic, it has been shown to perform well in practice. We make three modifications to this setup:
 
 \begin{enumerate}
 \item Weight assignments that do not fulfill our mandated preferences are not explored.
 \item The search may terminate early, if a state is found that is within $\gamma$ of the original cost.
 \item The starting point of the search is a solution to \eqref{eq:game}. 
 \end{enumerate}
 
 The fist change is implemented by providing the search procedure with a copy of the mandated preferences. Each candidate state may be checked against these for viability: if the weights do not result in the correct inequalities, then the state is rejected (marked as infeasible).

 The reason for the second change is that our overall goal is to find weights that are no more than a factor $\gamma$ worse than the original TE cost. Normally, the program would run for some fixed number of rounds, and finally report the best weight assignment ever encountered; but now it can terminate earlier, if a state is found that is within $\gamma$ of the target.

 Finally, the choice of starting point is justified by the fact that the bulk of the effort has already been borne by solving \eqref{eq:game}, and the heuristic step is typically only required to adjust the weights slightly. Coupled with the presence of the $\gamma$ factor, the required number of iterations is typically quite low. Moreover, we are able to run multiple searches in parallel to refine the final solution, i.e. we run serveral simultaneous searches with different starting points to achieve better coverage of the solution space. This means that a viable solution can be found more quickly (by choosing the first to complete out of all searches), or that a solution of lower cost can be found (by allowing all searches to proceed further). It can also be helpful in presenting a user with multiple options for how to proceed, in the case where factors that are not part of the existing optimization need to be taken into account.
 
 The generation of these multiple starting points can be achieved by altering \eqref{eq:game}. Once we have one solution, we can add a term that assigns high cost if that solution is found again; and so on to find more and more possibilities. In our implementation, we use a simple driver program to spawn multiple local search processes. The first one to finish `wins' and the others are terminated. In the next section, we show experimentally that the resulting solution will satisfy all mandated preferences, and will be within a $\gamma$ factor of the original cost (if its feasible).

\section{Evaluation}
\label{sec:eval}
In this section, we present an evaluation of FixRoute. The goals of our evaluation are to demonstrate that given an existing oscillating network, FixRoute is able to generate a weight matrix that satisfies all mandated preferences and does not harm TE costs too much.

\subsection{Experimental Setup}
In sections \ref{sec:osc} and \ref{sec:optanal} we use a network topology derived from the backbone topology of AS1239 as recorded in the Topology Zoo ~\cite{fourteenth} (Figure \ref{fig:network}). In this setup, there are a non-trivial number of nodes and links, but more importantly there are many cycles and alternative paths. In these experiments, we set up a single router for each node in the dataset. All routers are configured as BGP route reflectors, and also speak OSPF along the same links. We attached three independent instances of the MED problem pattern to randomly chosen nodes. That is, each instance consists of two external ASes, advertising some prefix, connected to three random routers in the main AS, and using MED values as in Figure \ref{fig:MED}. Three independent oscillation problems to avoid, each inducing its own constraints on the IGP link weight values. We confirmed by experiment that each of the three prefixes could experience an oscillation, for some choice of weights.

We also give some validation results on a minimal example of MED oscillation ~\cite{second}. This topology, shown in Figure ~\ref{fig:secMED}, is a version of that of Figure \ref{fig:MED} but with an additional mirrored copy of the problematic MED values, in order to add a conflict between satisfying the two potential oscillation problems. 

\begin{figure}[h!]
  \centering
    \includegraphics[width=0.5\textwidth, height=5cm,keepaspectratio]{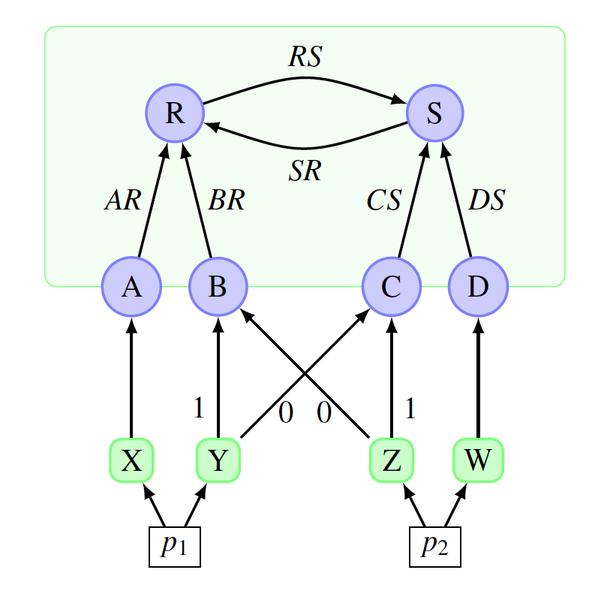}
      \caption{\label{fig:secMED} Network configuration with two problematic prefixes.}
\end{figure}

We set up each network topology on Emulab ~\cite{sixteenth}, using the Bird software router ~\cite{seventeenth} (v1.3.7). The Emulab machines were drawn from the 'pc850' pool, each of them being endowed with an $850$ MHz Pentium III processor and $512$ MB of PC133 ECC SDRAM. In the case of Figure \ref{fig:secMED}, six machines were used for the main AS nodes. Two additional machines were confugured for the external AS nodes (Bird BGP allows a router to present multiple AS personalities to its neighbors). A similar setup was used for the other example network, Figure \ref{fig:MED}. 

We made a single modification to the Bird code for the purposes of these experiments. Bird BGP contains an optimization to BGP best-route computation, which is intended to reduce the need for frequent recomputation based on the full routing table. This is based on the assumption that when the best route is $p$, and a new route $q$ arises (or an existing route has its attributes changed), then the only possible outcomes are for $p$ to remain, or for $q$ to be adopted, as the new best  route. That is, it assumes that BGP decision making satisfies the `independence of irrelevant alternatives'  axiom of decision theory ~\cite{eighteenth}. In fact,  this is not the case: in the example we are investigating, the presence of $q$ can cause the best route to change to a third route $r$, which is neither $p$ or $q$. We therefore removed this optimization. Given that we only have two external prefixes performance does not suffer.

As a side note, analysis of the predicate shows that the guideline in ~\cite{second}, that the inter-cluster links should be highly-weighted, is ineffective. The critical links are actually RB and SC. It is straightforward to generate examples where the inter-cluster links are huge but where the oscillation nevertheless continues. This observation is further evidence for why formal methods can be more helpful than rules of thumb. 

\subsection{Oscillation Analysis}
\label{sec:osc}
As our baseline, we first demonstrate that the the linearization in section ~\ref{sec:SMT} indeed results in the removal of oscillation. This is a necessary prelude to the more complex setting of sections ~\ref{sec:unequal}, and ~\ref{sec:equal}, where the linearization is embedded in the context of several other solution steps.

To assess the accuracy of the PSPP characterization, we carried out a random reweighing procedure: changing a link weight every thirty seconds (to a value between 1 and 100), for 23000 seconds and measured the control plane traffic. Figure ~\ref{fig:random} shows a representative graph of traffic covering a subsample of 10000 seconds. The periods of elevated traffic are due to routing oscillation. The shaded areas indicate when the network was unsafe according to our linearizability criterion.

\begin{figure}[h!]
  \centering
    \includegraphics[width=0.5\textwidth]{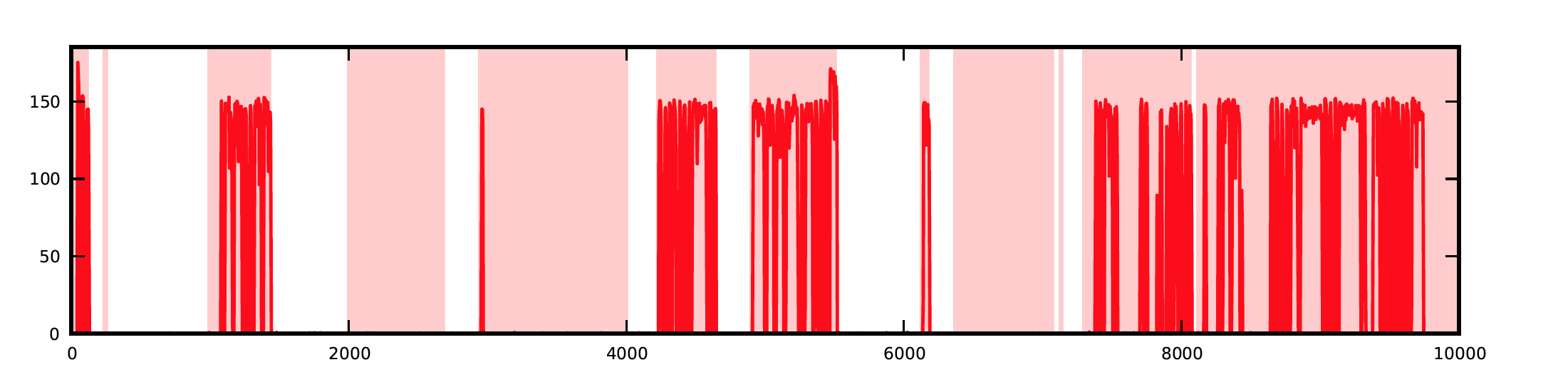}
      \caption{\label{fig:random} Control plane traffic in Kbps during the random reweighing process. The shaded area denotes states deemed to be potentially unsafe.}
\end{figure}

Oscillations occur in only $47.3\%$ of the states which were deemed unsafe; however, there was never an oscillation during a state which was considered to be safe. This matches our theoretical understanding of the PSPP model. 

For this run, the median oscillation duration was $166.5$ seconds, corresponding to $6$ link weights needing to be changed. The maximum was $1102$ seconds ($32$ changes), and the minimum was $3$ (1 change).

\begin{figure}[h!]
  \centering
    \includegraphics[width=0.5\textwidth]{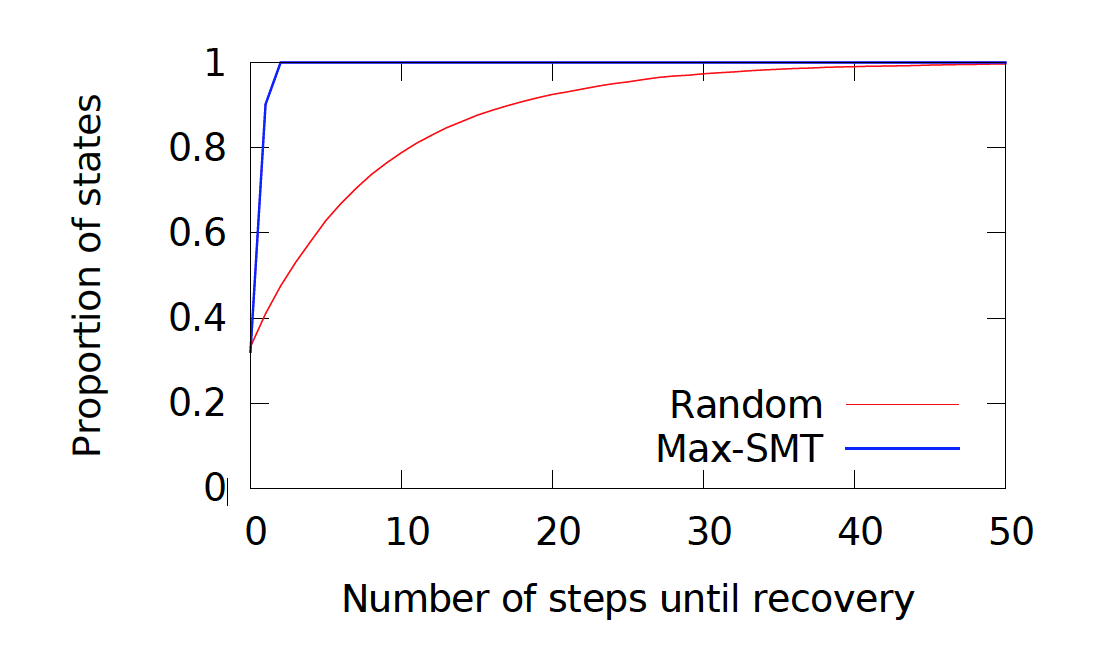}
      \caption{\label{fig:smtvsrandom} Number of steps until recovery.}
\end{figure}

Figure \ref{fig:smtvsrandom} shows the result of automated testing of the Max-SMT recovery procedure, compared to random recovery, on a sample of 10000 link weight assignments for Figure \ref{fig:secMED}. The random procedure changes link weights at random until the safety predicate returns true; as shown in the graph, dozens of reweighing steps may be required. 

On the other hand Max-SMT performs better: no state is more than two changes away from safety, and the bulk of them (about $90.2\%$) need only one change or are already safe, needing zero changes. Results were similar on the topology of Figure \ref{fig:network}.

\begin{figure}[h!]
  \centering
    \includegraphics[width=0.5\textwidth]{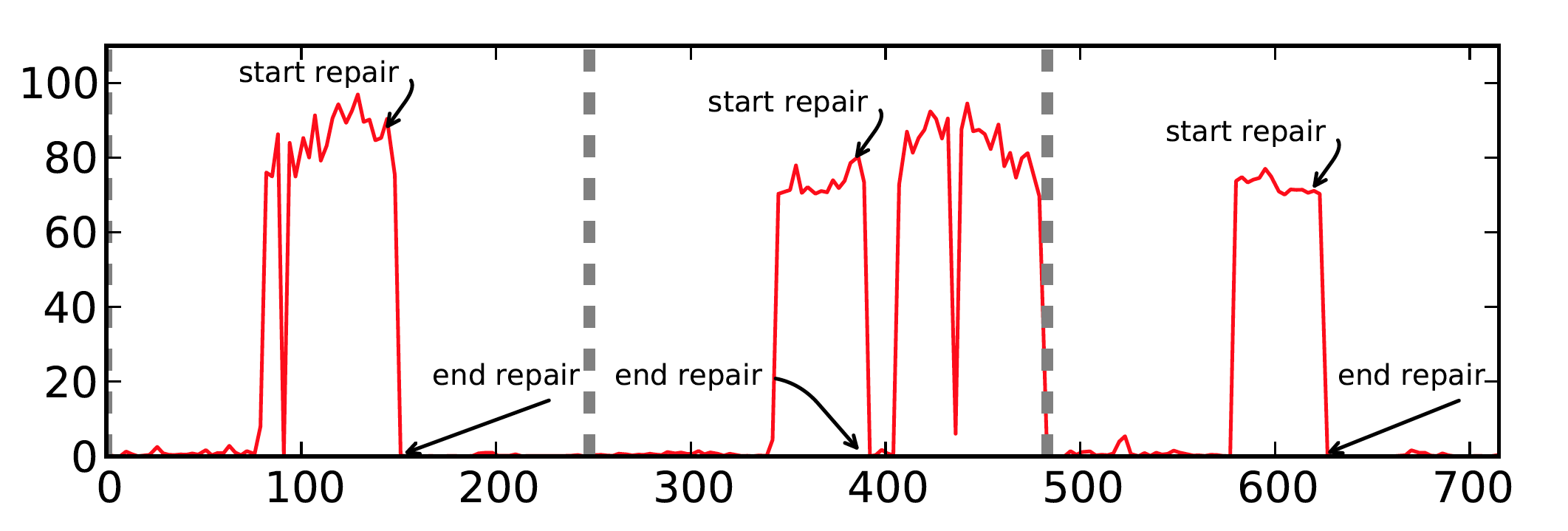}
      \caption{\label{fig:repairproc} Control plane traffic in Kbps during three attempts at solving oscillation problems.}
\end{figure}

Figure \ref{fig:repairproc} shows the observed control plane traffic for three experiments, using the same initial oscillatory state. The first and third of these use two different solutions emerging from our method. The second segment shows a failed attempt, where the link weight changes were chosen at random and the oscillation was disrupted but not removed. The arrows show the moments in time when each set of link weights began to be installed, and when the reconfiguration process had finished: at this point OSPF must reconverge, followed by BGP. While one would hope that practical networking wisdom performs better than a random process, it is not obvious that it can always achieve the optimal results here, which are orders of magnitude better. It also seems plausible that human ingenuity would normally require more than a few minutes to resolve these convergence failures. 

In the next section, we show the impact of these weight changes on network traffic, to see whether the cure is worse than the disease.

\subsection{Optimality Analysis}
\label{sec:optanal}
The simple Max-SMT does not preserve TE cost.  In further experiments with 100 demand matrices, the Max-SMT solution resulted in changes in link load of between $14-23\%$, with large quantities of traffic shifting from one path to another. What about FixRoute?

\begin{figure}[h!]
  \centering
    \includegraphics[width=0.5\textwidth]{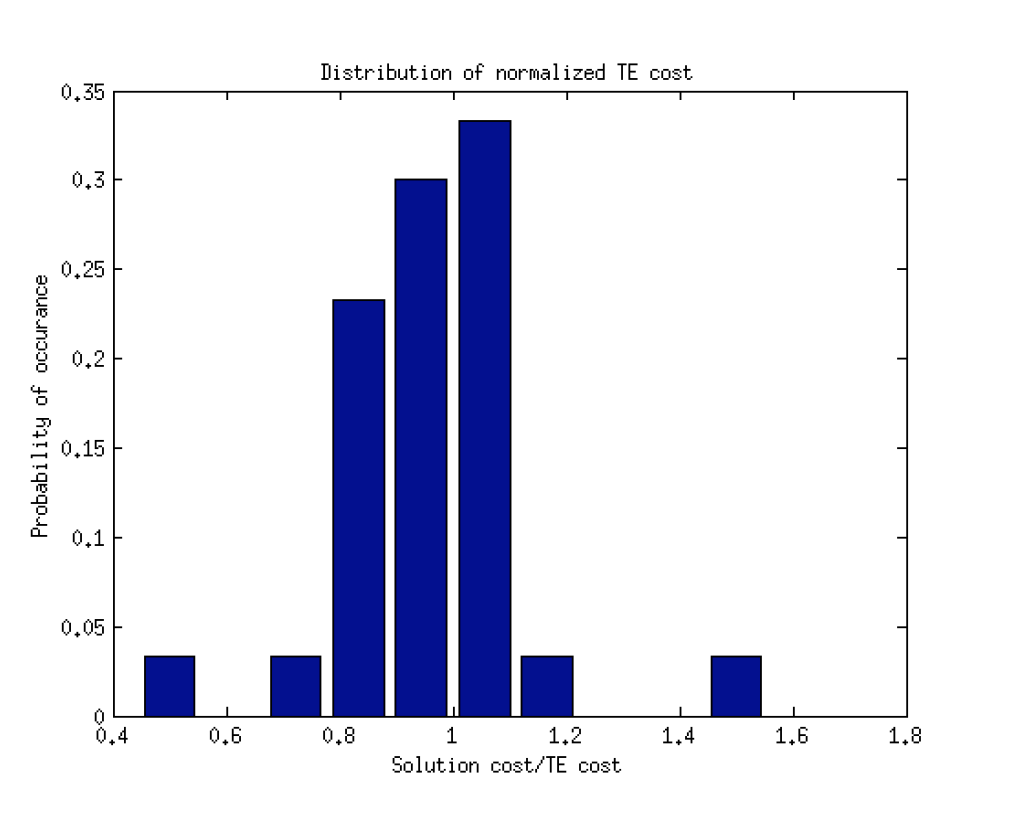}
      \caption{\label{fig:hist} The effect on TE optimality: normalized costs for 29 instances of the repair process, starting from an oscillatory state.}
      \end{figure}
      
   \begin{table}[t]
      \begin{tabular}{ l c r }
      Nodes & Server time (s) & Local search time (s) \\
      \hline
      20 & 0.09020 & 0.350 \\
      30 & 0.2244	&	2.8645\\
      40 & 0.3203 &   3.360\\
      60 & 0.6348 & 17.36 \\
      70 & 0.8418 & 33.59\\
\end{tabular}
     \caption{\label{fig:10} Running time for Waxman random networks. The solver time is reported by Matlab's timer. The local search time is the user and system time before the first search thread terminates.}
\end{table}

Figure \ref{fig:hist} shows that (for the topology of Figure \ref{fig:network}) it did not: in almost all cases the finals cost was close to the initial cost. These figures are normalized to the TE cost of the starting traffic flow, and these states are already optimized using conventional iterative methods.

Oddly, we see that FixRoute is able to improve TE cost in some experiments. We believe that this is because the parallel local search procedure starts from a nearly optimal state and is more likely to find an optimal solution compared to the TE optimization of Fortz and Thorup ~\cite{fifth}.

Our previous results validate the capabilities of FixRoute to repair oscillations with TE optimizations. Our results demonstrate that our heuristic based numeric solution not only can enforce  a particular mandated path preference, but is effective at mitigating the impact on TE. 

To assess the scalibility of our tool, random topologies of various sizes were generated using Waxman's random graph model ~\cite{ninteenth}, using $\alpha=0.15$ and $\beta=0.5$. According to this model, nodes are scattered uniformly at random within a rectangle (here, a unit square). The probability with which a link exists is proportional to its physical length. This probability is given by 

\begin{equation}
P(i,j)=\alpha e^{-d_{ij}/BL},
\end{equation}

where $d_{ij}$ is the distance between two nodes, and $L=\sqrt{2} \max_{i,j} d_{ij}$. Bandwidths on the links are assigned based on their Euclidean length. Links whose length is less than $L/2$ have a bandwidth of $25000$ Mbps, and those of greater length are assigned a bandwidth of $10000$ Mbps.

We also want to assess the effect on running time of giving the mandated preferences. For each test case, we chose $40$ paths consisting of random nodes and of random length, and for those having the same source node (but not necessarily the same egress point), the direction of preference between them was again chosen uniformaly at random \footnote{Note that there are also cases in which neither path is required to be preferred to the other.}.

Though equation \eqref{eq:opt1} can be solved most efficiently using the method of ~\cite{twelvth}, we initially used the CVX 2.0 solver ~\cite{twenty}, ~\cite{twenty1}, incorporating Gurobi to deal with integer constraints ~\cite{twenty2}. This however resulted in long computation time as the size of the network increased above $15$ nodes. Therefore, for larger networks, we used the relaxation of the integer problem. The resulting feasible solution was rounded to an integral one, and this was used as input to the heuristic local search. Note that as the weights are a means of inducing ordering among path, it is highly likely that all of the mandated preferences will remain in effect after this rounding is performed, if not one can return to the original integer programing formulation. Also note that our penalty function $h(x)$ induces a cost to \textit{any} change in $w^e$, and thus the new problem will change the minimum possible number of paths.

       \begin{figure}[t]
  \centering
    \includegraphics[width=0.5\textwidth]{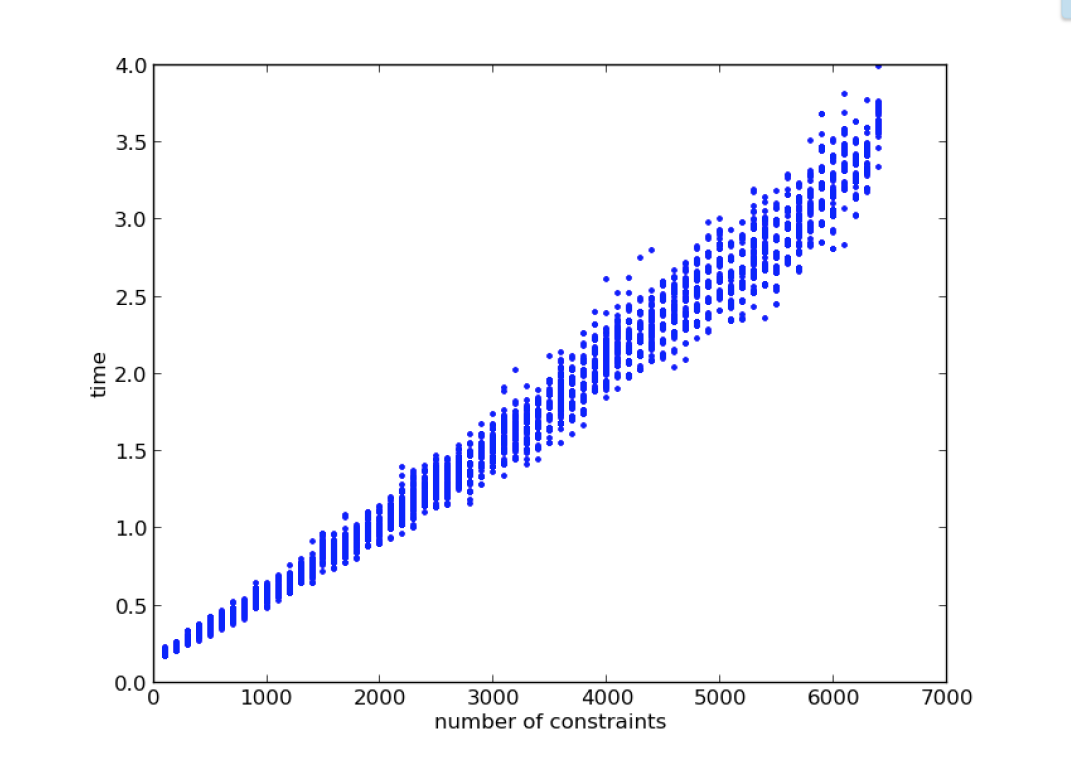}
      \caption{\label{fig:scale} Engine time (s) for different predicate size.}
      \end{figure}
Table \ref{fig:10} shows the running time of the two steps of the local search process, on networks of different sizes. Our running times are obtained by executing FixRoute on a 24-core Linux machine with $66$ GB memory. Each core is an Intel Xeon CPU X5660 at $2.80$ GHz. We observe that as the size of the network increases from 20 to 70, the solver execution time is modest: requires at most $0.82$ \footnote{This is the sum of the execution time required in preparing all four threads.} seconds even for the largest size network. The local search time on the other hand is higher reaching $33.6$ seconds.
As a point of comparison we also perform a similar performance measurement of the Max-SMT approach on the same machine. We note that the Max-SMT solver requires 282 seconds to complete, which is orders of magnitude worse compared to our local search approach. 

 Finally, Figure \ref{fig:scale} demonstrates an apparently linear relationship between the number of constraints and solver time. This shows that Solver time is small regardless of the number of constraints\footnote{Here we add preference constraints uniformly at random. }.
 
 We have done further analysis of the interplay of tradeoffs (between the number of constraints and TE cost) which we have omitted here due to space considerations, a technical report of those results exist which we will not refer to here due to anonymity.

\section{Related Work}
BGP oscillation is well-studied, with theoretical work in the development of SPP and related formalisms ~\cite{sixth}, ~\cite{seventh},~\cite{eigth}. There are related approaches to path computation in distributed systems that have solid basis in linear algebra ~\cite{twenty4}, ~\cite{twenty5}, ~\cite{twenty6}, ~\cite{twenty7} and therefore may be related to `traditional' TE approaches  ~\cite{twenty8}. These have faced their own problems with applicability to IP networks, and various solutions have been developed to handle the nuaunces of techniques such as OSPF ~\cite{fifth}, ~\cite{twenty9}, ~\cite{thirty}, and even BGP ~\cite{thirty1},~\cite{thirty2}.

In network management of routing protocols, recent attention has been given to migration from one IGP to another while avoiding transient faults ~\cite{thirty3} and to avoid MED related problems in iBGP ~\cite{fourth}. Unlike the latter piece of work, we require no protocol change, but rewrite the configuration to meet traffic engineering goals. 

Much of the current activity in routing and network management is centered on the concept of the software-defined network (SDN). In this context, we are greatly influenced by  ~\cite{thirty4} which provides a verified concept of safe network transitions for SDNs. Our work is in the traditional IP context, where dynamic protocols are present, and we are able to supply formally supported methodology for network changes in this domain. Though not the focus of our paper, FixRoute can be applied to SDN configurations as well, and is a basis for our future work. 

\section{conclusion}
This paper presents FixRoute, a toolchain that automatically repairs existing router configurations, so that oscillations are avoided while not penalizing existing traffic engineering goals. Our solution uses a combination of logical and numeric methods, made possible through our novel PSPP model for partial network configurations, integrated with non-convex integer constraint optimizations and local search heuristics. Our performance on Emulab and in large scale network simulations demonstrate that FixRoute is computationally efficient, enables fast network repair at low convergence times, and achieves good traffic engineering performance.

We believe this approach shows promise, and has applicability beyond the immediate use in network repair. In particular rapid execution of the solution process, without the need for multiple explicit simulations, has the potential to transform the management of networks by equipping operators with a higher-level view of configuration possibilities and trade-off.

\begin{scriptsize}
\bibliographystyle{abbrv}
\bibliography{biblio}

\begin{thebibliography}{10}

\bibitem{seventeenth}
Filip, O., Mares, M., Zajicek, O, Bird Internet Routing Daemon
  \url{http://bird.network.cz}.

\bibitem{twelvth}
R.~K. Ahuja, D.~S. Hochbaum, and J.~B. Orlin.
\newblock Solving the convex cost integer dual network flow problem.
\newblock {\em Management Science}, 49(7):950--964, 2003.

\bibitem{twenty4}
D.~P. Bertsekas.
\newblock Distributed asynchronous computation of fixed points.
\newblock {\em Mathematical Programming}, 27(1):107--120, 1983.

\bibitem{twenty5}
D.~P. Bertsekas and J.~N. Tsitsiklis.
\newblock {\em Parallel and distributed computation: numerical methods},
  volume~23.
\newblock Prentice hall Englewood Cliffs, NJ, 1989.

\bibitem{twenty7}
B.~Ducourthial and S.~Tixeuil.
\newblock Self-stabilization with path algebra.
\newblock {\em Theoretical Computer Science}, 293(1):219--236, 2003.

\bibitem{eleventh}
B.~Dutertre and L.~De~Moura.
\newblock A fast linear-arithmetic solver for dpll (t).
\newblock In {\em Computer Aided Verification}, pages 81--94. Springer, 2006.

\bibitem{fourth}
A.~Flavel and M.~Roughan.
\newblock Stable and flexible ibgp.
\newblock {\em ACM SIGCOMM Computer Communication Review}, 39(4):183--194,
  2009.

\bibitem{twenty9}
B.~Fortz, J.~Rexford, and M.~Thorup.
\newblock Traffic engineering with traditional ip routing protocols.
\newblock {\em Communications Magazine, IEEE}, 40(10):118--124, 2002.

\bibitem{fifth}
B.~Fortz and M.~Thorup.
\newblock Internet traffic engineering by optimizing ospf weights.
\newblock In {\em INFOCOM 2000. Nineteenth Annual Joint Conference of the IEEE
  Computer and Communications Societies. Proceedings. IEEE}, volume~2, pages
  519--528. IEEE, 2000.

\bibitem{twenty6}
M.~Gondran and M.~Minoux.
\newblock {\em Graphes, dio{\"\i}des et semi-anneaux: Nouveaux mod{\`e}les et
  algorithmes}.
\newblock Editions Tec \& Doc, 2001.

\bibitem{twenty}
M.~Grant, S.~Boyd, and Y.~Y. Cvx.
\newblock Matlab software for disciplined convex programming, version 1.0 beta
  3.
\newblock {\em Recent Advances in Learning and Control}, 2006.

\bibitem{twenty1}
M.~C. Grant and S.~P. Boyd.
\newblock Graph implementations for nonsmooth convex programs.
\newblock In {\em Recent advances in learning and control}, pages 95--110.
  Springer, 2008.

\bibitem{tenth}
T.~Griffin and G.~Huston.
\newblock Bgp wedgies.
\newblock 2005.

\bibitem{seventh}
T.~G. Griffin, F.~B. Shepherd, and G.~Wilfong.
\newblock The stable paths problem and interdomain routing.
\newblock {\em IEEE/ACM Transactions on Networking (ToN)}, 10(2):232--243,
  2002.

\bibitem{sixth}
T.~G. Griffin and G.~Wilfong.
\newblock An analysis of bgp convergence properties.
\newblock {\em ACM SIGCOMM Computer Communication Review}, 29(4):277--288,
  1999.

\bibitem{ninth}
A.~J. Gurney, L.~Jia, A.~Wang, and B.~T. Loo.
\newblock Partial specification of routing configurations.
\newblock 2011.

\bibitem{eq}
C.~E. Hopps.
\newblock Analysis of an equal-cost multi-path algorithm.
\newblock 2000.

\bibitem{fifteenth}
S.~Knight, H.~X. Nguyen, N.~Falkner, R.~Bowden, and M.~Roughan.
\newblock The Internet Topology Zoo (dataset)\url{http://www.topology-zoo.org}.

\bibitem{second}
D.~McPherson, V.~Gill, D.~Walton, and A.~Retana.
\newblock Border gateway protocol (bgp) persistent route oscillation condition.
\newblock Technical report, RFC 3345, August, 2002.

\bibitem{feamster}
M.~Motiwala, A.~Dhamdhere, N.~Feamster, and A.~Lakhina.
\newblock Towards a cost model for network traffic.
\newblock {\em ACM SIGCOMM Computer Communication Review}, 42(1):54--60, 2012.

\bibitem{twenty2}
G.~Optimization.
\newblock Gurobi optimizer. version 3.0, 2010.

\bibitem{thirty2}
B.~Quoitin, C.~Pelsser, O.~Bonaventure, and S.~Uhlig.
\newblock A performance evaluation of bgp-based traffic engineering.
\newblock {\em International journal of network management}, 15(3):177--191,
  2005.

\bibitem{thirty1}
B.~Quoitin, C.~Pelsser, L.~Swinnen, O.~Bonaventure, and S.~Uhlig.
\newblock Interdomain traffic engineering with bgp.
\newblock {\em Communications Magazine, IEEE}, 41(5):122--128, 2003.

\bibitem{twenty3}
B.~Quoitin and S.~Uhlig.
\newblock Modeling the routing of an autonomous system with c-bgp.
\newblock {\em Network, IEEE}, 19(6):12--19, 2005.

\bibitem{twenty8}
T.~S. Rappaport et~al.
\newblock {\em Wireless communications: principles and practice}, volume~2.
\newblock prentice hall PTR New Jersey, 1996.

\bibitem{eighteenth}
P.~Ray.
\newblock Independence of irrelevant alternatives.
\newblock {\em Econometrica: Journal of the Econometric Society}, pages
  987--991, 1973.

\bibitem{thirty4}
M.~Reitblatt, N.~Foster, J.~Rexford, C.~Schlesinger, and D.~Walker.
\newblock Abstractions for network update.
\newblock In {\em Proceedings of the ACM SIGCOMM 2012 conference on
  Applications, technologies, architectures, and protocols for computer
  communication}, pages 323--334. ACM, 2012.

\bibitem{thirteenth}
S.~A. Siddiqui.
\newblock Solving two-level optimization problems with applications to robust
  design and energy markets.
\newblock Technical report, DTIC Document, 2011.

\bibitem{fourteenth}
K.~Simon, N.~H. X, N.~Falkner, R.~Bowden, and M.~Roughan.
\newblock The internet topology zoo.
\newblock {\em Selected Areas in Communications, IEEE Journal on},
  29(9):1765--1775, 2011.

\bibitem{eigth}
J.~L. Sobrinho.
\newblock An algebraic theory of dynamic network routing.
\newblock {\em IEEE/ACM Transactions on Networking (TON)}, 13(5):1160--1173,
  2005.

\bibitem{thirty}
A.~Sridharan, R.~Guerin, and C.~Diot.
\newblock Achieving near-optimal traffic engineering solutions for current
  ospf/is-is networks.
\newblock {\em IEEE/ACM Transactions on Networking (TON)}, 13(2):234--247,
  2005.

\bibitem{thirty3}
L.~Vanbever, S.~Vissicchio, C.~Pelsser, P.~Francois, and O.~Bonaventure.
\newblock Seamless network-wide igp migrations.
\newblock {\em ACM SIGCOMM Computer Communication Review}, 41(4):314--325,
  2011.

\bibitem{first}
K.~Varadhan, R.~Govindan, and D.~Estrin.
\newblock Persistent route oscillations in inter-domain routing.
\newblock {\em Computer networks}, 32(1):1--16, 2000.

\bibitem{third}
C.~Villamizar, R.~Govindan, and R.~Chandra.
\newblock Bgp route flap damping.
\newblock {\em ISI}, 1998.

\bibitem{ninteenth}
B.~M. Waxman.
\newblock Routing of multipoint connections.
\newblock {\em Selected Areas in Communications, IEEE Journal on},
  6(9):1617--1622, 1988.

\bibitem{sixteenth}
B.~White, J.~Lepreau, L.~Stoller, R.~Ricci, S.~Guruprasad, M.~Newbold,
  M.~Hibler, C.~Barb, and A.~Joglekar.
\newblock An integrated experimental environment for distributed systems and
  networks.
\newblock {\em ACM SIGOPS Operating Systems Review}, 36(SI):255--270, 2002.

\end{thebibliography}
\end{scriptsize}
\end{document}